\renewcommand{\p@subsection}{}
\renewcommand{\p@subsubsection}{}
\def\l@subsubsection#1#2{}
\theoremstyle{plain}
\newtheorem*{result*}{Main Result}
\newtheorem{lemma}{Lemma}
\newtheorem{theorem}[lemma]{Theorem}
\newtheorem{remark}[lemma]{Remark}
\newtheorem{fact}[lemma]{Fact}
\newtheorem{claim}[lemma]{Claim}
\newtheorem{assume}[lemma]{Assumption}
\theoremstyle{definition}
\newtheorem{definition}[lemma]{Definition}
\newcommand{\eps}{\varepsilon}
\newcommand{\verifier}[0]{\mathcal{V}}
\newcommand{\prover}[0]{\mathcal{P}}
\newcommand{\be}{\begin{equation}}
\newcommand{\ee}{\end{equation}}
\newcommand{\ihpc}{Institute of High Performance Computing (IHPC), Agency for Science,
Technology and Research (A*STAR), 1 Fusionopolis Way,
\# 16-16 Connexis, Singapore 138632, Republic of Singapore}
\newcommand{\affilR}{Centre for Quantum Technologies and Department of Computer Science, National University of Singapore and
MajuLab, UMI 3654, Singapore}
\newcommand{\suppress}[1]{}
\begin{document}

\title{\Large On the power of geometrically-local classical and quantum circuits}

\author{Kishor Bharti}
\email{kishor.bharti1@gmail.com}
\affiliation{\ihpc}

\author{Rahul Jain}
\email{rahul@comp.nus.edu.sg}
\affiliation{\affilR}

\begin{abstract}
We show a relation, based on parallel repetition of the Magic Square game, that can be solved, with probability exponentially close to $1$ (worst-case input), by $1D$ (uniform) depth $2$, geometrically-local, noisy (noise below a threshold), fan-in $4$, quantum circuits. We show that the same relation cannot be solved, with an exponentially small success probability (averaged over inputs drawn uniformly), by $1D$ (non-uniform) geometrically-local,  sub-linear depth, classical circuits consisting of fan-in $2$ NAND gates. Quantum and classical circuits are allowed to use input-independent (geometrically-non-local) resource states, that is entanglement and randomness respectively. To the best of our knowledge, previous best (analogous) depth separation for a task between quantum and classical  
circuits was constant v/s sub-logarithmic, although for general (geometrically non-local) circuits.

Our hardness result for classical circuits is based on a direct product theorem about classical communication protocols from Jain and Kundu~\cite{jain2022direct}. 

As an application, we propose a protocol that can potentially demonstrate verifiable quantum advantage in the NISQ era. We also provide generalizations of our result for higher dimensional circuits as well as a wider class of Bell games.   
\end{abstract}
\maketitle

\section{Introduction}
Presently, quantum computing systems offer a few hundred imperfect qubits with limited connections, constituting what is known as Noisy Intermediate-Scale Quantum (NISQ)~\cite{preskill2018quantum,bharti2021noisy,cerezo2021variational,deutsch2020harnessing} devices. During the NISQ era, a key goal has been to demonstrate quantum computational advantage, wherein a quantum computing device achieves experimental success in solving a classically hard problem. This not only represents a remarkable experimental milestone but also potentailly challenges the extended Church-Turing thesis~\cite{arora2009computational}.

Efforts have been directed towards showing separation between efficient quantum computation from efficient classical computation based on complexity theoretic conjectures~\cite{aaronson2011computational,bremner2017achieving,farhi2016quantum,bremner2016average}. A complementary direction has been to focus on quantum computational advantages in a regime where classical simulations are still efficient and unconditional results are feasible~\cite{bravyi2018quantum,coudron2021trading,gall2018average,watts2019expo,bravyi2020quantum}. 

\subsection{Results}
We discuss a relation problem based on parallel repetition of the Magic Square game and use the same in conjunction with geometric locality (see subsection~\ref{subsec: geometric_locality}) and initial resource state (see subsection~\ref{subsec: initial_resource} )  to arrive at the following result.

\begin{result*} [informal; see Theorem~\ref{thm: main}] There exists a relation, based on parallel repetition of the Magic Square game, that can be solved, with probability exponentially close to $1$ (worst-case input), by $1D$ (uniform) depth $2$, geometrically-local, noisy (noise below a threshold), fan-in $4$, entanglement assisted quantum circuits.

The same relation cannot be solved, with an exponentially small success probability (averaged over the inputs drawn uniformly), by $1D$ (non-uniform) geometrically-local, sub-linear depth, randomness-assisted, classical circuits consisting of fan-in $2$ NAND gates (note that NAND gates are universal).
\end{result*}

To prove the aforementioned result, we introduce a communication protocol for simulating geometrically-local circuits (see Lemma~\ref{Lemma: communication_protocol}). The classical hardness part of our result uses a direct product theorem about classical communication protocols from Jain and Kundu (see Fact~\ref{fact:jkleakageparallel}~\cite{jain2022direct}). We explore (see section~~\ref{sec: poq}) its application in providing verifiable quantum advantage~\cite{brakerski2020simpler,alagic2020non,kahanamoku2021classically,kalai2022quantum,jain2022direct}.
We also expand on our work, considering more complex circuits in higher dimensions (see subsection~\ref{sec: high_d_circuits}) and broader classes of Bell games (see subsection~\ref{subsec: wide_Bell games}). 

\subsection{Prior art}
Research on an unconditional separation between quantum and classical circuits was pioneered by Bravyi et al. \cite{bravyi2018quantum}. Their work rigorously demonstrated that constant-depth, bounded fan-in quantum circuits ($\mathrm{QNC}^0$) possess a verifiable computational advantage over constant-depth, bounded fan-in classical circuits ($\mathrm{NC}^0$). They introduced a relation problem that can be solved with certainty by a constant-depth quantum circuit employing geometrically-local gates on a two-dimensional grid of qubits. In contrast, any classical probabilistic circuit that achieves a success probability of at least $\frac{7}{8}$ in solving the same problem must have a depth that grows logarithmically with the input size. Importantly, the aforementioned separation also holds in the average-case scenario, where the classical circuit is required to solve a few instances of the problem randomly drawn from a suitable distribution. The foundation of this separation, as outlined in \cite{bravyi2018quantum}, stems from Bell nonlocality.

In a subsequent work, for a two-dimensional lattice of qubits, Ref.~\cite{coudron2021trading} discussed the existence of an efficiently samplable input distribution and demonstrated that, for any constant-depth classical circuit, the circuit returns an output that satisfies the relation with only a small probability on average over the choice of input. Furthermore, the success probability for the log-depth classical case (soundness error) in the relation problem described in \cite{coudron2021trading} can be exponentially small in the input size. In parallel, Ref.~\cite{gall2018average} provided an alternate proof for the average-case hardness using the framework of graph states. Both \cite{coudron2021trading} and \cite{gall2018average} employed parallel repetition to amplify the soundness guarantees.

Building upon the results of \cite{bravyi2018quantum}, Watts et al.~\cite{watts2019expo} extended the findings to the class of $\mathrm{AC}^0$ circuits, which represent classical, polynomial-size, constant-depth circuits comprising unbounded fan-in AND and OR gates, as well as NOT gates. They demonstrated that the relation problem introduced in \cite{bravyi2018quantum} is not in the $\mathrm{AC}^0$ class. Furthermore, they extended the results of \cite{bravyi2018quantum} to the average-case scenario and amplified the soundness guarantees, resulting in an exponentially small success probability for $\mathrm{AC}^0$ circuits.

Subsequently, Bravyi et al. \cite{bravyi2020quantum} expanded upon the findings of \cite{bravyi2018quantum}, this time also considering one-dimensional geometry. They proved that the separation between constant-depth classical and quantum circuits persists even in the presence of noise within the quantum circuits. The considered quantum circuits were subject to local stochastic noise, where a random Pauli error occurred at each time step in the ideal circuit. Although the error could affect multiple qubits, the probability of high-weight errors needed to be exponentially suppressed. To establish the separation in the noisy case, the authors employed techniques based on quantum error correction.

In Ref.~\cite{grier2020interactive}, the authors discussed a two-round interactive task that is solved by a constant-depth quantum circuit (using only Clifford gates, between neighbouring qubits of a 2D grid, with Pauli measurements), but such that any classical solution would necessarily solve $\oplus \mathrm{L}$-hard\footnote{Pronounced as Parity L. The task of simulating stabilizer circuits is $\oplus \mathrm{L}$-complete.} problems. The results in Ref.~\cite{grier2020interactive} were extended to the noisy case by~\cite{grier2021interactive} using techniques based on error correction from Ref.~\cite{bremner2011classical}.

\subsection{Comparison with prior work}
Here we delve into the similarities and differences between our work and the relevant literature (to the best of our knowledge).
\begin{itemize}
\item All previous works show a separation (for specific tasks) of constant v/s sub-logarithmic for depth required by general (geometrically non-local) quantum and classical circuits. We provide a much larger separation constant v/s sub-linear, however for geometrically local circuits. 
    \item Unlike Ref~\cite{coudron2021trading}, our results hold for $1D$ circuit families as well. \item Unlike Ref~\cite{bravyi2018quantum,bravyi2020quantum}, we provide noise robustness for every dimension and our noise robustness does not rely on error correction based approaches.

    \item Unlike any of the previous works, our result holds for wider class of Bell games.
    \item Unlike Ref~\cite{bravyi2018quantum,bravyi2020quantum}, the classical success probability is exponentially small.
    \item Unlike Ref~\cite{grier2020interactive,grier2021interactive}, our set-up is non-interactive.

    \item Unlike Ref~\cite{brakerski2020simpler,alagic2020non,kahanamoku2021classically,kalai2022quantum}, our quantum advantage protocol does not rely on any cryptographic assumption.

\end{itemize}
For a pictorial synopsis, please refer to the Table below. 

\begin{table}[H]
    \centering
    \small 
    \begin{tabular}{|c|c|c|c|c|c|c|c|c|c|}
    \hline
     & Depth (classical) & G-locality & Dimension & Bell-games & Soundness error & Robustness & QEC &  Qubits & Inputs \\ 
    \hline
    \cite{bravyi2018quantum} & sub-logarithmic & no & $D=2$ & specific & constant & no & N.A. & $?$ & avg${\star}$\\ 
    \hline
    \cite{coudron2021trading} & sub-logarithmic & no & $D=2$ & specific & 1/exp & yes & no & $10^6$ & avg\\ 
    \hline
    \cite{gall2018average} & sub-logarithmic & no & $D =2$ & no & 1/exp & yes & no & $?$ & avg\\ 
    \hline
    \cite{watts2019expo} & sub-logarithmic & no & $D = 2$ & no & 1/exp & no & N.A. & $?$ & avg\\ 
    \hline
    \cite{bravyi2020quantum} & sub-logarithmic & no & $D \in \{1, 3\}$ & specific & constant & yes~${\star\star}$ & yes & $?$ & avg\\ 
    \hline
    This work & sub-linear & yes & $D \in \mathbb{N}$ & general & 1/exp & yes & no & $\star \star \star$ & avg\\ 
    \hline
    \end{tabular}
    \label{tab:comparision}
\end{table}

\noindent $\star:$ Initial results were shown for the worst case, subsequently improved to the average case.\\
\noindent $\star \star$: The robustness was shown for $D=3$ case. \\
$\star \star \star$ Depending on the extent to which geometric locality is enforced by nature on real-world classical circuits. 
\newline 

Here, the table on the previous page juxtaposes our research with relevant literature on unconditional separation in a non-interactive set-up, underscoring both similarities and differences. The first column indicates the depth of the classical circuit that remains unable to solve some specific  relation as per the various works listed in each row. Yet, for each of these relations, a quantum circuit of constant depth suffices. The next column discusses if these studies are predicated on the assumption of geometric locality for classical circuits. Note that prior studies have demonstrated a depth separation of constant versus sub-logarithmic for tasks between general (geometrically non-local) quantum and classical circuits. In contrast, we present a more substantial separation of constant versus sub-linear depth, but specifically for geometrically local circuits. The  third column, labelled ``Dimension'', indicates the circuit family (geometric) dimension where an unconditional separation has been established in various studies. The subsequent column discusses whether the proof uses ideas based on Bell games. Here ``specific'' means that the proof works for a particular class of Bell games. Our proof works for general Bell games.  The fifth column presents the classical success probability (also referred to as the soundness error) as a function of input size $n$. The sixth column addresses the presence or absence of noise robustness in different works. The next column explores if techniques rooted in quantum error correction were employed to achieve this robustness. The penultimate column presents the number of qubits required for a possible demonstration of experimental quantum advantage. The final column discusses the input distribution for which the result holds (average case versus worst case). While question marks (?) indicate that no value has been reported or is available, N.A. denotes ``not applicable.''

\subsection{Why geometric locality matters}  \label{subsec: geometric_locality}
Geometric locality refers to the idea that an object's influence is limited to its immediate surroundings. A theory that follows this principle is called a ``local theory'' and differs from the concept of instant action over large distances. Locality emerged from field theories in classical physics, suggesting the need for an intermediary element between two points for one location to affect another. This intermediary, such as a wave or particle, travels through space connecting the points to transmit the influence. 

The concept of geometric locality holds significant importance in classical memory devices as well, primarily because programs have a tendency to access data that is in close spatial proximity to other previously accessed data~\cite{kumar1998exploiting}. In this context, the term ``nearby'' pertains to the spatially surrounding memory addresses of the data. Moreover, note that transistors are locally compact objects. To execute geometrically non-local gates, one may require multiple transistors or a powerful signal that can travel large distance. There exists a balance between signal strength and the efficacy of adjacent transistors, as potent signals can cause undesired interactions. Consequently, there's a maximum limit to signal strength, potentially enforcing the use of several transistors when applying geometrically non-local gates. This increased transistor count consequently results in a time cost when implementing such gates. The speed of light is another constraint for signals to travel long distances. As asserted by Einstein, geometric locality is fundamental to our understanding of the physical realm (both classical and quantum).

\subsection{Need for the initial (non-local) resource state} \label{subsec: initial_resource}
The study of computational complexity often involves the consideration of input-independent resource states, randomness being an example. While efficient deterministic algorithms remain challenging for many problems, randomized algorithms offer elegant and efficient solutions by utilizing independent and unbiased coin tosses during computation, resulting in random variable outputs. Despite the possibility of errors, embracing perfect randomness as additional input and accommodating small errors in the output can significantly enhance efficiency for difficult problems. This approach has given rise to complexity classes like $\mathrm{BPP}$ and $\mathrm{FBPP}$ that leverage randomness as an initial resource~\cite{arora2009computational}.
In the context of Bell non-locality, classical provers can utilize shared randomness, while in the quantum scenario, entangled states serve as the analogous shared resource~\cite{bell1964einstein,clauser1969proposed,brunner2014bell}. Motivated by the aforementioned arguments, in this work, we allow classical circuits to start with input-independent (geometrically-non-local) shared randomness. For quantum circuits, quantum circuits can begin with an input independent (geometrically-non-local) entangled state.

\subsection{Organization}
We start with the preliminaries in section~\ref{sec:prelim}. In this part, we talk about parallel repetition of Bell games and the parallel repetition result (see Fact~\ref{fact:jkleakageparallel}) from Jain and Kundu~\cite{jain2022direct}, that we use to prove our classical hardness result. Moving on to section~\ref{sec: main_results_sec}, we discuss our results. In section~\ref{subsec: relation}, we introduce the relation problem based on parallel repetition of the Magic Square game. In section~\ref{subsec: com_protocol}, we present a two-party communication protocol for simulating geometrically-local circuits (see Lemma~\ref{Lemma: communication_protocol}). The main result is presented in section~\ref{subsec: main_result} under Theorem~\ref{thm: main}. In section~\ref{sec: poq}, we explore its application for verifiable quantum advantage. Moving further, in section~\ref{sec: generamization_applications}, we expand on our work, considering more complex circuits in higher dimensions and broader classes of Bell games.  Finally, in section~\ref{sec: discuss}, we touch upon unresolved questions.

\section{Preliminaries} \label{sec:prelim}

\subsection{Parallel two-prover Bell games}

In a two-prover one-round Bell game, denoted as $\mathcal{G}$, participants include a referee and two isolated players, referred to as Alice and Bob. Notably, Alice and Bob communicate exclusively with the referee, without any direct interaction between them. The game's process can be described as follows.

The referee initiates the game by randomly selecting a question pair $(x, y)$ from a specific distribution $\pi$. Subsequently, the referee sends question $x$ to Alice and question $y$ to Bob. Both Alice and Bob provide their respective answers, denoted as $a$ and $b$, respectively. The game's objective is to determine a winning scenario by evaluating the predicate $V(x, y, a, b)$. A victory is declared when the predicate evaluates to $1$, indicating a successful outcome based on the combination of questions and players' responses. We proceed with the formal Definition of a two-prover one-round Bell game.
\begin{definition}
[Two-prover one-round Bell game] Given a predicate $V:\mathcal{X}\times\mathcal{Y\times\mathcal{A}}\times\mathcal{B}\rightarrow\{0,1\}$
and a probability distribution $\pi$ on $\mathcal{X}\times\mathcal{Y}$,
a two-prover one-round Bell game $\mathcal{G}=\left(V,\pi\right)$ involves two provers
and one verifier, which proceeds as follows.
\begin{enumerate}
\item The verifier samples a pair of questions $\left(x,y\right)\in\mathcal{X}\times\mathcal{Y}$
according to the probability distribution $\pi$.
\item The verifier sends $x$ and $y$ to the two provers and receives answers
$a\in\mathcal{A}$ and $b\in\mathcal{B}$ respectively.
\item The verifier applies the predicate $V:\mathcal{X}\times\mathcal{Y\times\mathcal{A}}\times\mathcal{B}\rightarrow\{0,1\}$
and accepts the answers if the outcome is $1$, rejects otherwise.
\end{enumerate}
\end{definition}

In the game $\mathcal{G}$, the maximum probability of winning for Alice and Bob depends on the resources they are allowed to utilize. When their answers are solely determined by the received question (perhaps with some shared randomness), we refer to this maximum winning probability as the classical value, represented as $\mathrm{val}\left(\mathcal{G}\right)$ or $\omega_c\left(\mathcal{G}\right)$. However, in the realm of quantum mechanics, Alice and Bob can leverage entanglement to establish correlations that cannot be achieved solely through shared randomness. In such cases, the maximum winning probability is referred to as the quantum value, denoted as  $\mathrm{val}^{\star}\left(\mathcal{G}\right)$ or $\omega_q\left(\mathcal{G}\right)$.

In the game $\mathcal{G}^m$, Alice and Bob engage in $m$ separate and independent instances of game $\mathcal{G}$ simultaneously. One may inquire about the value of a game $\mathcal{G}^m$. This concept is formally known as parallel repetition, capturing the essence of playing multiple instances in parallel.

\begin{definition}
[Parallel repetition] Given a two prover game $\mathcal{G}=\left(V,\pi\right)$ and
some $m\in\mathbb{N},$ an $m$-fold parallel repetition of the game
$\mathcal{G}$, denoted by $\mathcal{G}^{m}$ involves following:
\end{definition}

\begin{enumerate}
\item The verifier samples $m$ pairs of questions, say $\left(x_{1},y_{1}\right),\left(x_{2},y_{2}\right)\cdots\left(x_{m},y_{m}\right)$
independently.
\item The sequence $\left(x_{1},\cdots,x_{m}\right)$ is sent to the
first prover. The second prover receives $\left(y_{1},\cdots,y_{m}\right).$
\item The first prover sends the answer $\left(a_{1},a_{2},\cdots,a_{m}\right).$
The second prover sends $\left(b_{1},b_{2},\cdots,b_{m}\right).$ 
\item The verifier accepts iff $\forall i\in[m]: V\left(x_{i},y_{i},a_{i},b_{i}\right)=1.$
\end{enumerate}
\begin{remark}
It is easy to see that $\mathrm{val}\left(\mathcal{G}^{m}\right)\geq$ $\mathrm{val}\left(\mathcal{G}\right)^{m}.$
This is because the combined strategy to win the game includes the
independent single-shot strategies as a special case.
\end{remark}

\subsection{Magic Square game}
The Magic Square game is an example of a two-prover Bell game $\mathcal{G}_{\mathrm{MS}}$ where $\mathrm{val}^{\star}\left(\mathcal{G}_{\mathrm{MS}}\right) = 1$ is strictly greater than  val$\left(\mathcal{G}_{\mathrm{MS}}\right)= \frac{8}{9}$. In this game, Alice and Bob receive random inputs from the set $\{0, 1, 2\}$. They receive these inputs independently and uniformly. They produce outputs $a$ and $b$ as binary strings of length three, with the condition that the bitwise XOR (exclusive OR) of the elements in $a$ should result in 0, and the XOR of the elements in $b$ should result in 1. Their goal is to satisfy the condition $a[y] = b[x]$, where $x$ and $y$ are their respective inputs. In other words, they aim to make sure that the $y$th element of $a$ matches the $x$th element of $b$.
Formally, the Magic Square game $\mathcal{G}_{\mathrm{MS}} = (\pi, V)$  for $V: X \times Y \times A \times B \rightarrow \{0,1\} $ can be described as follows.
\begin{enumerate}
    \item The input sets are $X = Y = \{0, 1, 2\}$.
    \item The output sets are $A = B = \{0,1\}^3$.
    \item The probability distribution $\pi$ is defined such that $\pi(0, 0) = \pi(0, 1) = \pi(1, 0) =\cdots =  \pi(2, 2) = \frac{1}{9}$, meaning each pair of inputs is equally likely.
    \item The predicate  $V(x,y,a,b)= 1$ iff $a[0] \oplus a[1] \oplus a[2] = 0$, $b[0] \oplus b[1] \oplus b[2] = 1$ and $a[y] = b[x]$.
\end{enumerate}
See Figure~\ref{fig:MS_Game} for a pictorial depiction of the Magic Square game. To prove our classical hardness statement, we will use the following Fact from~\cite{jain2022direct}.
\begin{fact}[\cite{jain2022direct}] \label{fact:jkleakageparallel}
    The probability (averaged over the inputs drawn uniformly) of classical players winning $0.99 n$ games out of $n$ parallel copies of the Magic Square game, with $o(n)$ bits being interactively leaked between Alice and Bob is at most $2^{-\Omega(n)}$.
\end{fact}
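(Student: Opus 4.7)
The plan is to prove the contrapositive via an information-theoretic reduction to the single-shot value $\omega_c(\mathcal{G}_{\mathrm{MS}}) = 8/9$. Fix any classical strategy that, given public randomness $R$ and inputs $X=(X_1,\dots,X_n), Y=(Y_1,\dots,Y_n)$ drawn uniformly from $\{0,1,2\}^n \times \{0,1,2\}^n$, interactively produces a transcript $\Pi$ of total length $\ell = o(n)$ and outputs $A=(A_1,\dots,A_n), B=(B_1,\dots,B_n)$; let $W_i \in \{0,1\}$ indicate whether copy $i$ is won. Suppose for contradiction that the event $E = \{\sum_i W_i \geq 0.99 n\}$ has probability at least $2^{-\delta n}$ for a sufficiently small constant $\delta > 0$.

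First I would pick a random ``witness set'' $T \subseteq [n]$ of size $\gamma n$ for a small constant $\gamma > 0$ and use a simple averaging argument inside $E$ to locate a refined event $E_T \subseteq E$ on which Alice and Bob win every game in $T$ \emph{and} a $(0.99 - O(\sqrt{\gamma}))$-fraction of games outside $T$. After conditioning on $E_T$, on the transcript $\Pi$, on the witness outcomes $(X_T, Y_T, A_T, B_T)$, and on the shared randomness $R$, the chain rule yields
\begin{equation}
\mathbb{E}_{i \notin T}\; I\!\left(X_i ; Y_i \,\middle|\, R, \Pi, X_T, Y_T, A_T, B_T\right) \;\leq\; \frac{\ell + \delta n + O(\gamma n)}{(1-\gamma)\,n} \;=\; o(1) + O(\gamma + \delta),
\end{equation}
where the $\delta n$ term pays for conditioning on the rare event $E$ and the $O(\gamma n)$ term for conditioning on the witness coordinates.

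This near-product structure lets me extract a no-communication single-shot strategy for $\mathcal{G}_{\mathrm{MS}}$. On input $(x,y)$, Alice and Bob use their public randomness to sample a coordinate $i \notin T$ together with a realization of $(T, R, \Pi, X_T, Y_T, A_T, B_T)$ drawn from the distribution conditioned on $E_T$; Alice then samples the remaining $X_{[n] \setminus (T \cup \{i\})}$ from the appropriate conditional and substitutes $x$ for $X_i$, while Bob symmetrically substitutes $y$ for $Y_i$; both then output the $A_i$ or $B_i$ produced by running their own sides of the original protocol with $\Pi$ as fixed advice. By Pinsker applied to the mutual-information bound above, the induced joint law of $(X_i, Y_i, A_i, B_i)$ is $o(1)$-close in total variation to its law under the original conditioned protocol, so this simulated strategy wins with probability at least $0.99 - O(\sqrt{\gamma+\delta}) - o(1)$, strictly exceeding $8/9$ for small $\gamma,\delta$ and large $n$, contradicting $\omega_c(\mathcal{G}_{\mathrm{MS}}) = 8/9$.

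The main obstacle is the correlated-sampling step in the presence of \emph{interactive} (rather than one-way) leakage: each bit of $\Pi$ that Alice sends can simultaneously carry information about $X$ to Bob and perturb the conditional distribution of Bob's subsequent messages, so the information-theoretic accounting must be done round-by-round. This is where the Jain--Kundu machinery becomes essential: one must show that on a typical transcript prefix the posterior law of $(X,Y)$ remains close to a product distribution, and then implement the sampling of the leftover coordinates so that Alice and Bob can do it independently from shared randomness with negligible distortion, in the spirit of Raz/Holenstein correlated sampling. Controlling this distortion while simultaneously surviving the exponentially-rare conditioning on $E_T$ is the technical heart of the argument.
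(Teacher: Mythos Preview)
The paper does not prove this statement at all: it is stated as a \emph{Fact} with a bare citation to Jain--Kundu~\cite{jain2022direct} and is then invoked as a black box in the soundness part of Theorem~\ref{thm: main}. There is consequently no proof in the paper to compare your proposal against.

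As for the proposal itself, it is a reasonable high-level roadmap of a Raz/Holenstein-style parallel-repetition argument with an added leakage budget, and you correctly isolate the genuine difficulty: maintaining an approximate product structure on $(X,Y)$ after conditioning on an \emph{interactive} transcript $\Pi$ together with an exponentially rare success event, and then realizing the embedding via correlated sampling that Alice and Bob can each perform locally. However, your sketch does not actually discharge this obstacle; you explicitly say ``this is where the Jain--Kundu machinery becomes essential'' and stop. In effect, your proposal terminates at the same citation the paper already makes. If you intend this as a standalone proof, the missing content is precisely the round-by-round information accounting and the correlated-sampling lemma under interactive leakage; without those, the displayed mutual-information bound and the subsequent Pinsker step are assertions rather than established facts in your write-up.
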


\begin{figure}
\centering
\includegraphics[width=0.7\textwidth, height=7cm]{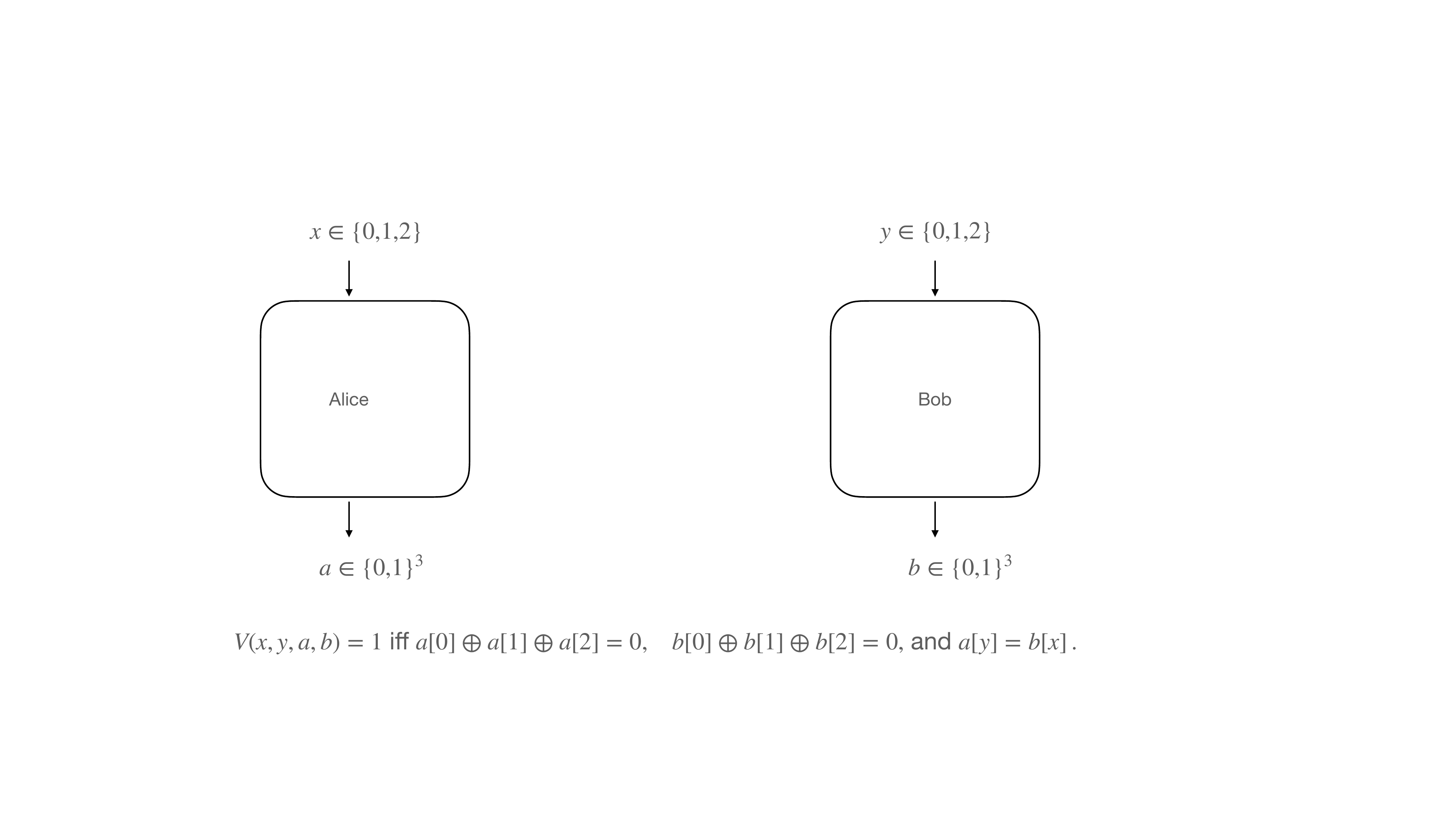}
\caption{The Magic Square game.}
\label{fig:MS_Game}
 \end{figure}

\begin{figure}
\centering
\includegraphics[width=0.7\textwidth, height=5cm]{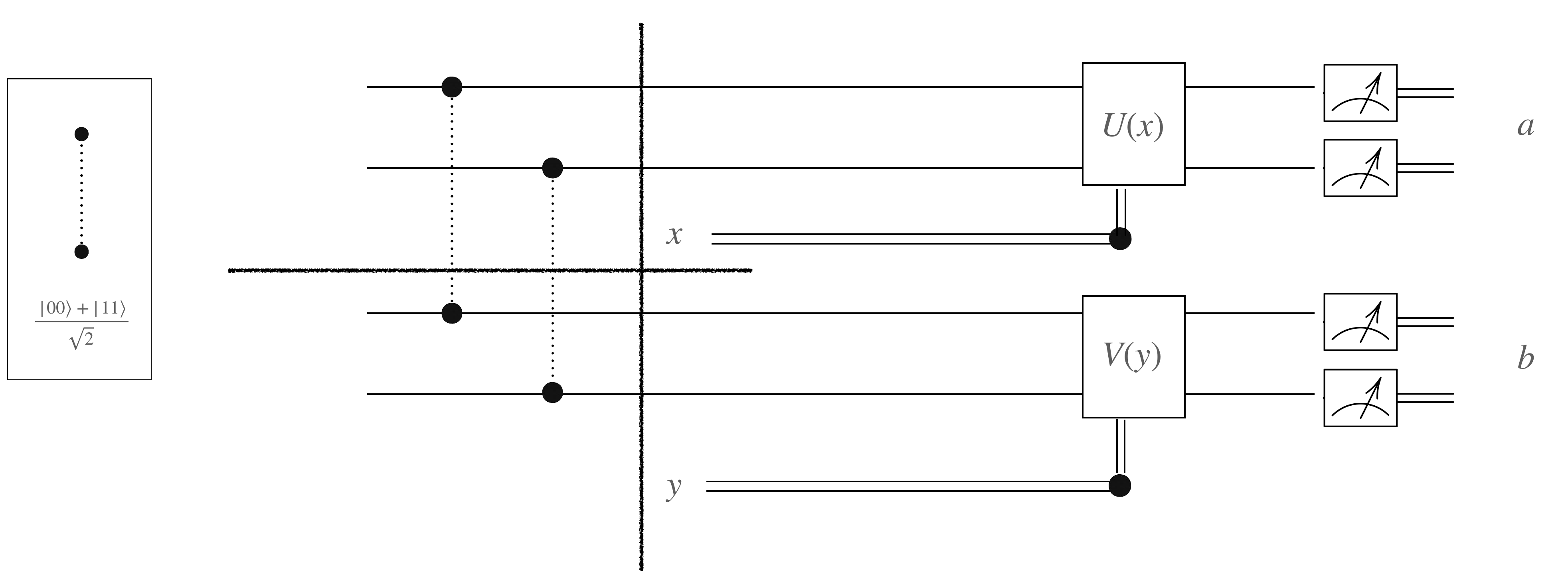}
\caption{The Magic Square game circuit. There are two cuts, represented by dashed lines. The computation starts right to the vertical dashed line. The quantum state prepared left to the vertical cut is an input-independent initial resource state (pair of Bell-pairs). For a detailed description of the \( U \) and \( V \) gates for different inputs, please consult Figure~2 in Ref~\cite{bravyi2020quantum}.
}
\label{fig:MS_circuit}
 \end{figure}

\section{Constant-depth quantum versus sublinear-depth classical circuits with geometric locality} \label{sec: main_results_sec}

\subsection{The relation problem: parallel repetition of the Magic Square game} \label{subsec: relation}
In the realm of quantum computing, notable instances have emerged wherein researchers attained significant insights by shifting their focus from decision problems to relation problems. This transition has led to noteworthy achievements in various domains, including quantum supremacy protocols such as Boson sampling~\cite{aaronson2011computational,zhong2020quantum} and random circuit sampling~\cite{boixo2018characterizing,arute2019quantum}. Moreover, recent results like the Yamakawa-Zhandry protocol~\cite{yamakawa2022verifiable}, alongside several unconditional complexity theory separations~\cite{aaronson2023qubit,grier2020interactive,grier2021interactive,coudron2021trading,bravyi2018quantum,bravyi2020quantum}, exemplify the fruitful outcomes of this shift in attention. Relational problems, denoted by a relation $R \subseteq \{0, 1\}^* \times \{0, 1\}^*$, entail finding a suitable output $y$ for a given input $x$ such that $(x, y) \in R$. 
\begin{definition} \label{defline_input} (Line input of size $n$)  We are given inputs $x, y\in  \{0,1\}^{2n}$ arranged along a line such that the following are adjacent: (1) $x_i$ and $x_{i+1}$ for $i \in \{1,2,\cdots, n-1\}$ (2) $y_i$ and $y_{i+1}$ for $i \in \{1,2,\cdots, n-1\}$, and (3) $x_n$ and $y_1$.
\end{definition}
We define the relational problem with geometrical locality constraint corresponding to parallel repetition of $n$ Magic Square games, henceforth referred as PARMAGIC$(\delta)$.
\begin{definition} [PARMAGIC $(\delta)$] Given line input $x, y\in  \{0,1\}^{2n}$, the goal is to obtain $a, b\in  \{0,1\}^{2n}$ such that $n(1-\delta)$ tuples $(x_i,y_i,a_i,b_i)$ satisfy the Magic Square game predicate $V(x_i,y_i,a_i,b_i)$.
\end{definition}

\begin{figure}
\centering
\includegraphics[width=0.9\textwidth, height=10cm]{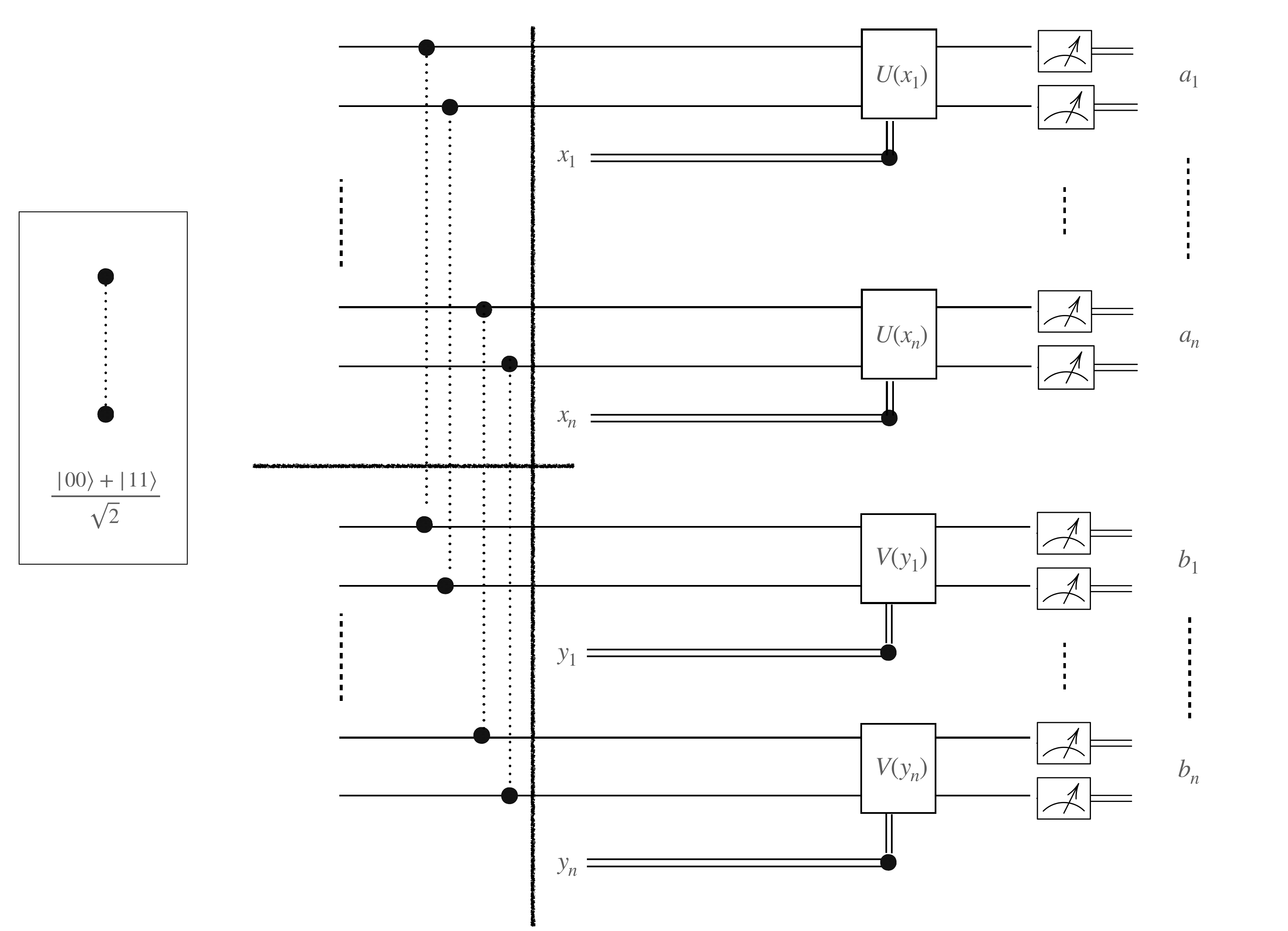}
\caption{The PARMAGIC circuit. There are two cuts, represented by solid black lines. The computation starts right to the vertical solid black line. The quantum state prepared left to the vertical cut is an input-independent initial resource state. The vertical line connecting two solid black circles represents a two-qubit maximally entangled state. The gates $U(x_i)$ and $V(y_j)$ are same as the the two qubit classically controlled gates $U$ and $V$ in Figure~\ref{fig:MS_circuit}.}
\label{fig:PARMAGIC_circuit}
 \end{figure}

\subsection{Communication protocol for simulating geometrically-local circuits} \label{subsec: com_protocol}
A geometrically-local, randomness-assisted, depth $d$ classical circuit $\mathscr{C}$ with constant fan-in gates can be simulated using a two-party randomness-assisted classical communication protocol. This simulation process incurs a communication of $\mathcal{O}(d)$ bits, indicating that the amount of information exchanged scales linearly with the depth of the circuit. We proceed with a rigorous treatment of the statement mentioned above. 
\begin{figure}
\centering
\includegraphics[width=0.8\textwidth, height=8cm]{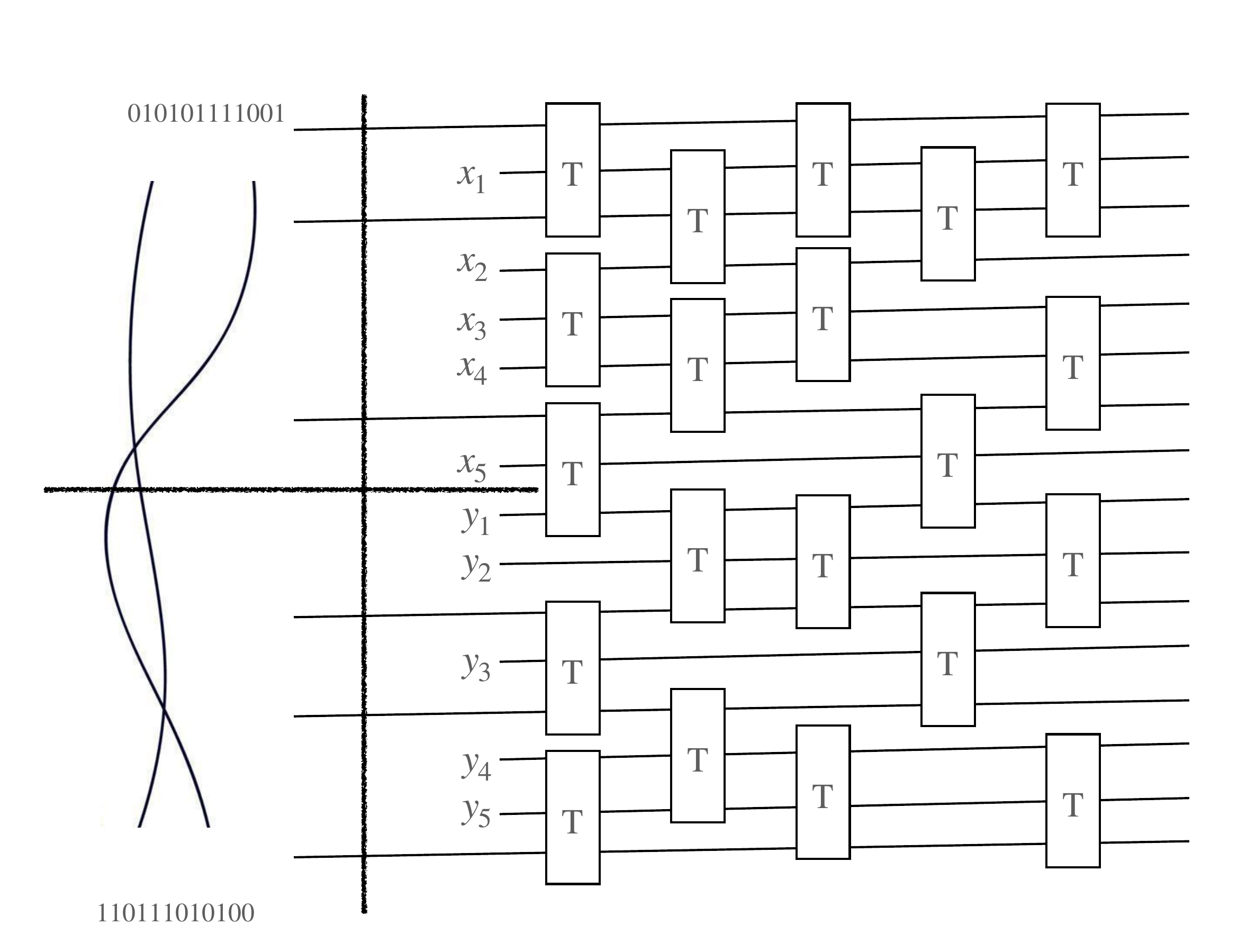}
\caption{An example classical circuit. Just like in the quantum case, we have two cuts represented by dashed lines. The content to the left of the vertical cut represents the shared randomness.}
\label{fig:circuit_classical}
 \end{figure}

\begin{claim} \label{claim:Toffoli} A $1D$, depth $d$, geometrically-local, classical circuit consisting of fan-in $2$ NAND gates can be simulated using a $1D$, $O(d)$-depth, geometrically-local, classical circuit consisting of fan-in $3$, fan-out $3$, Toffoli gates.
    
\end{claim}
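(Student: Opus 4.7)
The plan is to give a layer-by-layer translation from the NAND circuit to a Toffoli circuit, using the identity $\neg(a\wedge b) = 1\oplus(a\wedge b)$. A Toffoli gate $\mathrm{Tof}(a,b;c)$ applied to a third wire pre-initialized to $c=1$ outputs $(a,b,\,1\oplus(a\wedge b)) = (a,b,\,\mathrm{NAND}(a,b))$, so the third output carries the NAND value while the two input wires are preserved by the Toffoli's reversibility. This single-gate substitution is the core of the argument; what remains is to verify that it can be executed while respecting $1D$ geometric locality and with only $O(1)$ Toffoli layers per NAND layer.

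First I would set up the wire layout. Alongside each data wire of the original NAND circuit, I allocate a small constant number of adjacent ancilla tracks on the $1D$ line, pre-initialized to $0$ or $1$, so that every pair of neighboring data wires has a $1$-ancilla sitting immediately next to them. With this interleaving, each NAND gate in the original circuit, which by $1D$ geometric locality acts on two neighboring data wires, is simulated in a single Toffoli step on those two data wires plus the adjacent $1$-ancilla; the three wires involved are consecutive on the enlarged line, so geometric locality is preserved.

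The main subtlety is handling fan-out. A wire at layer $t$ may feed into several NAND gates at layer $t+1$; one such consumer is ``free'' because the Toffoli preserves its two non-target inputs, and each additional consumer can be provided by a CNOT-style copy realised as $\mathrm{Tof}(a,1;c)$ on a fresh $0$-ancilla, which outputs $(a,1,a)$. In a $1D$ geometrically-local circuit with fan-in $2$, the fan-out of each wire is bounded by a constant, so only $O(1)$ copy Toffoli gates are needed per wire per original layer, and they can be arranged into $O(1)$ Toffoli sub-layers while respecting adjacency on the enlarged line. Combining the NAND-to-Toffoli substitution with these copy sub-layers yields a simulating circuit of depth $O(d)$ in the claimed gate set.

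The hardest part is really the bookkeeping: choosing a track discipline for ancilla placement and copy routing so that every triple of wires touched by each simulating Toffoli is actually consecutive on the $1D$ line, both for the NAND-simulating Toffolis and for the fan-out copies, uniformly across all $d$ layers. This is a combinatorial layout exercise rather than a conceptual difficulty; once the track discipline is fixed, correctness follows gate-by-gate from the identity $\mathrm{NAND}(a,b)=1\oplus(a\wedge b)$ together with the observation that the same Toffoli gate implements both ``compute'' (with third-wire ancilla $1$) and ``copy'' (with third-wire ancilla $0$ and second input $1$).
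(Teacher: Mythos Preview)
Your proposal is correct and takes essentially the same approach as the paper: simulate each NAND by a Toffoli with the third input initialized to $1$, realise fan-out via the copy identity $\mathrm{Tof}(a,1;0)\to(a,1,a)$, and use the fact that $1D$ geometric locality bounds the fan-out of each gate by a constant (the paper states the bound $3$ explicitly) so that each NAND layer costs $O(1)$ Toffoli layers. Your write-up is in fact more careful than the paper's about the ancilla layout and track discipline, but the underlying argument is identical.
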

\begin{proof}
The Toffoli gate operates as $\{a,b,c\} \rightarrow \{a,b,c \oplus ab\}$
where $\oplus$ represents bitwise modulo addition. With $c$ set to 1, it simulates the NAND gate as $\{a,b,1\} \rightarrow \{a,b, 1 \oplus ab\}$.  

Furthermore, Toffoli gate can emulate the fan-out operation: $\{a,1,0\} \rightarrow \{a,1, a\}$. Note that in a  $1D$, geometrically-local circuit, the fan-out of any NAND gate is limited to $3$. Hence, when leveraging the Toffoli gate for fan-out simulation, each layer incurs a constant depth cost. This shows the desired.
\end{proof}
For a $1D$ circuit with only Toffoli gates, we define a notion of horizontal cut for any line input of size $n$.
\begin{definition} \label{defn: h-cut}(Horizontal cut) Let $\mathscr{C}$ be a $1D$,  geometrically-local circuit of depth $d$, consisting of Toffoli gates, acting on line input of size $n$. For each $i\in[d]$, there exists a partition of the input wires to the gates in layer $i$ in two disjoint sets $U_{i}$ and $D_{i}$ using the following rule.
\begin{enumerate}
    \item $U_1$ consists of the input wires corresponding to $(x_1, \ldots, x_n)$ and $D_1$ consists of the input wires corresponding to $(y_1, \ldots, y_n)$.
    \item If a Toffoli gate T in $\mathscr{C}$ receives all inputs from $U_i$, all output wires of T are mapped to $U_{i+1}$.
     \item If a Toffoli gate T in $\mathscr{C}$ receives all inputs from $D_i$, all output wires of T are mapped to $D_{i+1}$.
      \item If a Toffoli gate T in $\mathscr{C}$ receives inputs from $U_i$ as well as $D_i$, all output wires of T are mapped to $U_{i+1}$.
\end{enumerate}
\end{definition}
We proceed with the formal statement on the communication complexity of our communication protocol for simulating geometrically-local circuits.
\begin{lemma} \label{Lemma: communication_protocol}
    A $1D$, depth $d$ geometrically-local circuit $\mathscr{C}$, consisting of Toffoli gates, acting on line input of size $n$, using shared randomness as an initial resource can be simulated using a randomness-assisted two-party communication protocol with communication $\mathcal{O}(d)$.
\end{lemma}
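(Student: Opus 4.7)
The plan is to exhibit an explicit two-party protocol in which Alice holds $(x_1,\ldots,x_n)$, Bob holds $(y_1,\ldots,y_n)$, both access the shared-randomness resource from the outset, and they together run $\mathscr{C}$ layer by layer while maintaining the invariant that, at the start of layer $i$, Alice knows the value of every wire in $U_i$ and Bob knows the value of every wire in $D_i$, where $(U_i,D_i)$ is the horizontal cut of Definition~\ref{defn: h-cut}. At $i=1$ the invariant holds because $U_1$ is precisely the wires carrying $x_1,\ldots,x_n$ together with the portion of the shared random string wired into Alice's side, and symmetrically for $D_1$.

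To advance from layer $i$ to layer $i+1$, classify the Toffoli gates in layer $i$ as \emph{pure-}$U$ (all inputs in $U_i$), \emph{pure-}$D$ (all inputs in $D_i$), or \emph{straddling} (inputs from both). Pure-$U$ gates are evaluated locally by Alice and, by rule~(2) of Definition~\ref{defn: h-cut}, their outputs remain in $U_{i+1}$, so the invariant is preserved there; pure-$D$ gates are handled symmetrically by Bob via rule~(3). For each straddling gate, rule~(4) routes every output wire into $U_{i+1}$, so it suffices for Bob to transmit to Alice the (at most three) $D_i$-bits feeding that gate; Alice then evaluates the gate and appends its outputs to her record, now of $U_{i+1}$.

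The quantitative heart of the argument is that, because $\mathscr{C}$ is $1$D geometrically local with fan-in $3$, each layer contains at most a constant number of straddling gates: they must sit within a constant-width window of wires around the unique linear-order boundary between $U_i$ and $D_i$, and gates in a single layer do not share input wires, so only $O(1)$ of them can fit there. Consequently Bob communicates only $O(1)$ bits per layer, and the $d$ layers contribute $O(d)$ bits in total. An immediate corollary is that the cut migrates by at most a constant number of wires per layer, so $|U_d \setminus U_1| = O(d)$.

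The final step is to deliver the circuit's designated output bits $a_1,\ldots,a_n$ and $b_1,\ldots,b_n$ to Alice and Bob respectively: since the cut has drifted by at most $O(d)$ positions, only $O(d)$ of these nominal output wires will be held by the ``wrong'' party at the end of the simulation, and a single final exchange of those bits completes the protocol within the $O(d)$ budget. Shared randomness incurs no communication, and correctness is immediate from the maintained invariant. I expect the main obstacle to be the careful bookkeeping of straddling gates and migrated output wires; once the constant-per-layer bound on straddling gates is made rigorous, the total $O(d)$ cost follows straightforwardly.
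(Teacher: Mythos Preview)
Your proof is correct and follows essentially the same approach as the paper's: both maintain the invariant that Alice holds $U_i$ and Bob holds $D_i$, simulate pure-$U$ and pure-$D$ gates locally, and exploit 1D geometric locality to bound the number of straddling gates per layer by a constant (the paper in fact asserts at most one). Your additional final step of routing the $O(d)$ migrated output wires back to the correct party is a detail the paper's proof glosses over but is indeed needed for the downstream application to Fact~\ref{fact:jkleakageparallel}.
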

\begin{proof} Let us denote the set of gates in $\mathscr{C}$ present in layer $i$  as $\mathscr{G}_{i}$.  The set  $\mathscr{G}_{i}$ is the union of three disjoint sets:
\begin{enumerate}
    \item $\mathscr{G}^u_{i}$: the set of gates that receive input wires from $U_i$, 
    \item $\mathscr{G}^d_{i}$: the set of gates that receive input wires from $D_i$ , and 
    \item $\mathscr{G}^a_{i}$: the set of gates that receive input wires from $U_i$ as well as $D_i$.
\end{enumerate} 
  In the randomness-assisted, communication protocol $\mathscr{P}$, Alice receives inputs  $(x_1, \ldots, x_n)$, while Bob receives inputs $(y_1, \ldots, y_n)$. They start with the same shared randomness as is present in $\mathscr{C}$. Given that Alice and Bob have already simulated the circuit $\mathscr{C}$ up to layer $i-1$, Alice can simulate locally the gates in  $\mathscr{G}^u_{i}$ and Bob can simulate locally the gates in  $\mathscr{G}^d_{i}$ without any communication. To simulate a gate in $\mathscr{G}^a_{i}$, Bob sends to Alice (using $O(1)$ bits) the value of the variables corresponding to the input wires in $D_i$. Note that, since $\mathscr{C}$ is $1D$, geometrically-local, there is at most one gate in $\mathscr{G}^a_{i}$.  Hence overall communication in $\mathscr{P}$ is $O(d)$. This completes the proof.
\end{proof}

\subsection{Main result} \label{subsec: main_result}
\begin{theorem} \label{thm: main}
Let $\delta \in [0,0.1]$. For the $\mathrm{PARMAGIC(\delta)}$ problem with line-input of size $n$, the following holds.
\begin{enumerate}
    \item (Completeness) $\mathrm{PARMAGIC(\delta)} \in \mathrm{\{gl\}FQU}\left(2, \delta/100, 1-2^{-\Omega(n)} \right)$, 
    
    where $ \mathrm{\{gl\}FQU}\left(d, \eps, p \right)$ represents the class of relation problems that can be solved by $1D$, geometrically-local, depth $d$,  uniform quantum circuits with  fan-in $4$ gates, per gate noise  $\eps$ and probability of success at least $p$ (worst-case over inputs).
    \item (Soundness) $\mathrm{PARMAGIC(\delta)} \notin \mathrm{\{gl\}F}\left(o(n), 2^{-o(n)} \right)$, 
    
    where 
    $\mathrm{\{gl\}F}\left(d, p \right)$ represents the class of relation problems that can be solved by $1D$,  geometrically-local, depth $d$, noiseless, non-uniform, classical circuits with fan-in $2$ NAND gates, with success probability at least $p$ (averaged over the inputs drawn uniformly). 
\end{enumerate}
\end{theorem}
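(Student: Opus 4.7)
My plan has two halves mirroring completeness and soundness. For completeness, I would tile $n$ independent copies of the standard depth-$2$ quantum protocol for a single Magic Square game (Figure~\ref{fig:MS_circuit}) along the line, one per instance $i \in [n]$, with the $i$-th copy acting on the qubits holding $x_i$ and $y_i$ together with a local, input-independent Bell-pair resource (the left of the vertical cut in Figure~\ref{fig:PARMAGIC_circuit}). Each copy applies two rounds of fan-in-$4$, geometrically-local gates, so the overall depth remains $2$ after tiling. Under local stochastic noise of rate $\eps = \delta/100$ per gate, and since each copy uses only constantly many gates, the per-game winning probability is some $p_{\mathrm{win}} \geq 1 - \delta/2$; crucially, the noise processes for distinct copies are independent since they act on disjoint qubits. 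A Chernoff bound on the indicators for winning each game then gives that at least $n(1-\delta)$ of the $n$ games are won except with probability $2^{-\Omega(n)}$, which is exactly the completeness statement.

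For soundness, I would argue by contrapositive. Suppose a $1D$, depth-$d = o(n)$, geometrically-local, fan-in-$2$ NAND circuit $\mathscr{C}$ solves $\mathrm{PARMAGIC}(\delta)$ with success probability $2^{-o(n)}$ averaged over uniform inputs. Claim~\ref{claim:Toffoli} converts $\mathscr{C}$ into a $1D$ geometrically-local Toffoli-only circuit of depth $O(d) = o(n)$ computing the same relation, and Lemma~\ref{Lemma: communication_protocol} then turns that circuit into a randomness-assisted two-party protocol $\mathscr{P}$, with Alice holding $(x_1, \ldots, x_n)$ and Bob holding $(y_1, \ldots, y_n)$, using $O(d) = o(n)$ bits of interactive communication. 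At the end of $\mathscr{P}$ the players jointly output $(a, b)$ satisfying $V(x_i, y_i, a_i, b_i) = 1$ for at least $n(1-\delta) \geq 0.9n$ coordinates with probability at least $2^{-o(n)}$.

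The final step is to invoke Fact~\ref{fact:jkleakageparallel}: $\mathscr{P}$ is precisely a classical strategy winning at least a $(1-\delta) \geq 0.9 > 8/9$ fraction of $n$ parallel Magic Square games with $o(n)$ bits of leakage, so the Jain--Kundu direct product theorem bounds its success probability by $2^{-\Omega(n)}$, contradicting the assumption. The main obstacle is this soundness reduction: the completeness part is mostly packaging the known Magic Square circuit together with an independent-copy concentration argument, whereas soundness requires threading the circuit hardness through the NAND-to-Toffoli conversion, the horizontal-cut simulation of Lemma~\ref{Lemma: communication_protocol}, and then matching the resulting protocol against the exact hypothesis of the direct product theorem. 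A minor technical point is that Fact~\ref{fact:jkleakageparallel} is stated at the $0.99n$ threshold; because \cite{jain2022direct} actually gives exponential hardness for any fraction strictly above the classical value $8/9$, the same statement applies at the $(1-\delta)n \geq 0.9n$ threshold needed here with only a change of constants in the exponent.
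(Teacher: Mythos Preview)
Your proposal is correct and follows essentially the same route as the paper's own proof: completeness via the tiled depth-$2$ Magic Square circuit plus a concentration bound over independent noisy copies, and soundness via Claim~\ref{claim:Toffoli}, Lemma~\ref{Lemma: communication_protocol}, and Fact~\ref{fact:jkleakageparallel} chained exactly as you describe. Your write-up is in fact more explicit than the paper's (which simply says soundness ``follows directly'' from those three ingredients), and your observation about the $0.99n$ versus $(1-\delta)n \geq 0.9n$ threshold is a genuine point the paper glosses over; your resolution via the more general statement from \cite{jain2022direct} (cf.\ Fact~\ref{fact:jkleakageparallel_general}) is the right fix.
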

\begin{proof} 

\emph{(1) Completeness}: Use the circuit in Figure~\ref{fig:PARMAGIC_circuit}. The quantum state prepared left to the vertical solid line is the initial resource state prepared in auxiliary registers. The depth $2$, fan-in $4$, quantum circuit, assuming noiseless gates, solves $\mathrm{PARMAGIC(0)}$ with success probability $1$. It is easily seen that this circuit family is uniform. Thus,  $$\mathrm{PARMAGIC(0)}  \in \mathrm{\{gl\}FQU}\left(2, 0 , 1\right).$$ 

Now let us assume that per gate noise is $\delta/100$. Since the depth of the circuit is $2$, it is not difficult to argue that any fixed input-output pair $(x_i,y_i,a_i,b_i)$ will satisfy the Magic Square predicate with probability at least $1 - \delta/2$. Thus, using standard concentration bounds, at least $(1 - \delta)$ fraction of the input-output pairs will satisfy the Magic Square predicate with probability at least $1 - 2^{-\Omega(n)}$, implying,
$$\mathrm{PARMAGIC(\delta)} \in \allowbreak \mathrm{\{gl\}FQU}\left(2, \delta/100, 1-2^{-\Omega(n)} \right).$$

\emph{(2) Soundness}: Follows directly from Claim~\ref{claim:Toffoli},  Lemma~\ref{Lemma: communication_protocol} and  Fact~\ref{fact:jkleakageparallel}.
\end{proof}
    
\section{Application: Verifiable quantum advantage} \label{sec: poq}
In this two-party protocol, the verifier $\verifier$ is an efficient (uniform) classical circuit and the honest prover $\prover$ is a constant depth geometrically-local (uniform) quantum circuit.

\begin{center}
\framebox{ 
\begin{minipage}{0.8\textwidth}
\vspace{5pt}
\textbf{Verifiable quantum advantage  protocol}
\vspace{5pt}
\hrule
\vspace{5pt}

\begin{enumerate}
\item  $\verifier$ chooses $x_1, \ldots, x_n \in \{0, 1, 2\}^n$ and $y_1, \ldots, y_n \in \{0, 1, 2\}^n$ uniformly at random.
\item $\verifier$ \textbf{simultaneously} sends $\textvisiblespace,\textvisiblespace,x_1,\textvisiblespace,\textvisiblespace,x_2,\textvisiblespace,\textvisiblespace, \ldots,\textvisiblespace,\textvisiblespace, x_n,\textvisiblespace,\textvisiblespace, y_1,\textvisiblespace,\textvisiblespace, y_2,\textvisiblespace,\textvisiblespace, \\ 
\ldots,\textvisiblespace,\textvisiblespace, y_n$ to $\prover$, in geometrically-local order. Here, geometrically-local order means that we have a line input (see Definition~\ref{defline_input}) with a $\textvisiblespace,\textvisiblespace$ before every element of the input string.
\item $\prover$ follows the protocols as in Figure~\ref{fig:input_swap} followed by in Figure~\ref{fig:PARMAGIC_circuit} and sends $a_1, \ldots, a_n, b_1, \ldots, b_n$ to $\verifier$.
\item $\verifier$ outputs $\top$ if $(a_i[0] \oplus a_i[1] \oplus a_i[2] = 0) \land (b_i[0] \oplus b_i[1] \oplus b_i[2] = 1) \land (a_i[y_i] = b_i[x_i])$ for at least $(1 - \delta)n$ many $i$'s in $[n]$, and outputs $\bot$ otherwise.
\end{enumerate}

\end{minipage}}
\end{center}
The completeness and soundness of our protocol follows from Theorem~\ref{thm: main}. 

\begin{figure}[H]
\centering
\includegraphics[width=0.65\textwidth]{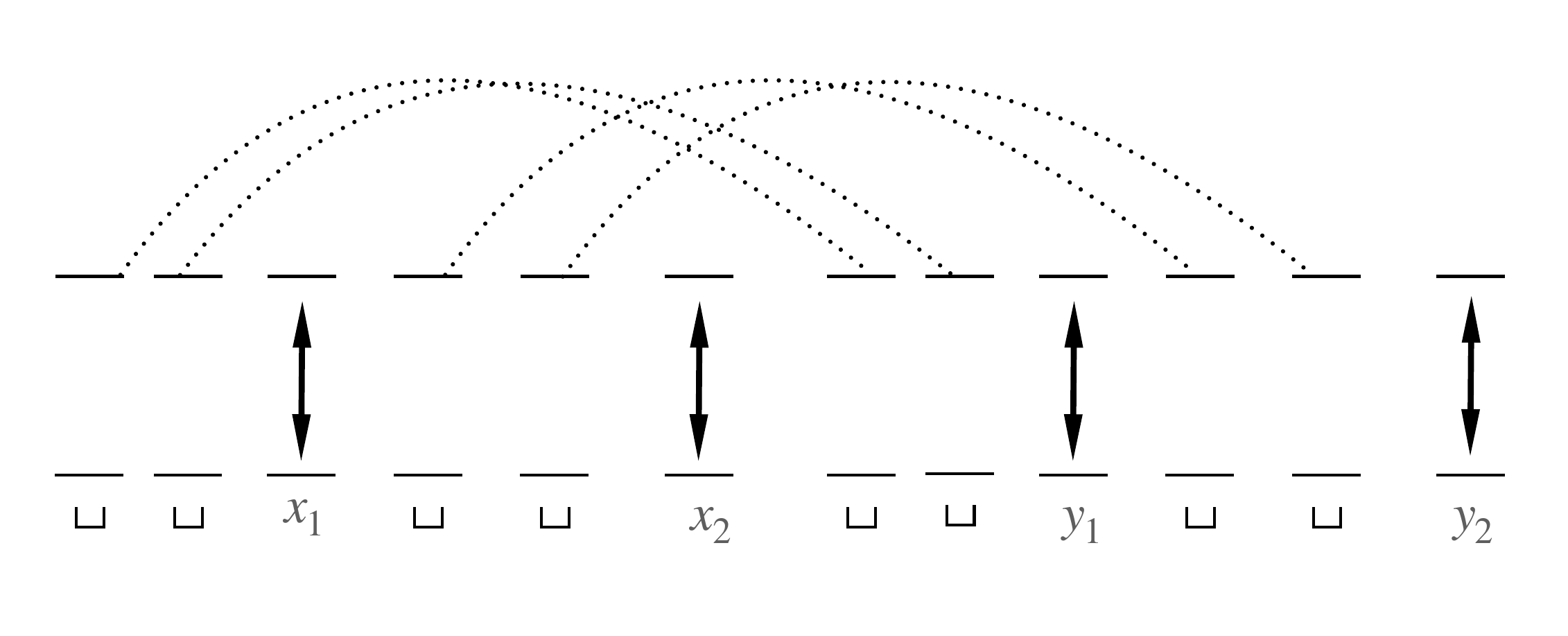}
\caption{The top vertical set of line segments denote the input register to the PARMAGIC circuit with $n =2$. The line segments connected by dashed curves lines represent the Bell pair $\ket{\phi_{+}}= \frac{\ket{00} + \ket{11}}{\sqrt{2}}$, the initial resource state prepared in auxiliary registers of the PARMAGIC circuit.  The value in every $3k$-th lower input register $\forall k \in \{1,2,3,4\}$ is supplied as input to the PARMAGIC cicuit by swapping the lower input register value with the corresponding upper input register. The honest quantum prover ignores the input registers containing  $\textvisiblespace$.}
\label{fig:input_swap}
 \end{figure}

The fastest classical processors operate at a clock speed of 
 a few GHz (upper bounded by 10 GHZ)~\cite{coudron2021trading}. This means that a circuit of depth $d$ requires approximately \( 10^{-9}d \) seconds to execute. In comparison, the current gate operation times for photonic quantum computers are roughly a few nanoseconds (upper bounded by 2)~\cite{gerbert2018next}. To achieve a quantum advantage, against geometrically-local classical circuits, we need to ensure that \( 10^{-10} d > 2 \times 10^{-9} \). When \( d>20 \), this inequality holds true. Therefore, it may be possible to demonstrate quantum advantage using the technology available in the NISQ era. This will also depend on how much geometric locality is enforced by nature on the real-world classical circuits.

\section{Generalizations} \label{sec: generamization_applications}
\subsection{Higher dimensional circuits} \label{sec: high_d_circuits}
We start with defining the analogue of Definition~\ref{defline_input} for any $D \in \mathbb{N}$ (although of course we note that there are three accessible spatial dimensions). 
\begin{definition} [$D$-dimensional grid input of size $n$ and orientation graph $G= (V,E)$]  We are given inputs $x, y\in  \{0,1\}^{2n}$ arranged on a $D$-dimensional grid with orientation graph $G$ such that the following are adjacent independent of $G$:
 (1) $x_i$ and $x_{i+1}$ for $i \in \{1,2,\cdots, n-1\}$ (2) $y_i$ and $y_{i+1}$ for $i \in \{1,2,\cdots, n-1\}$, (3) $x_i$ and $y_i$ for $i \in \{1,2,\cdots,n^\frac{D-1}{D}\}$.
 Furthermore, $u,v \in V \times V$ are adjacent whenever $(u,v) \in E$. Here, $V=  \{x_i\}_{i=1}^{n} \cup  \{y_i\}_{i=1}^{n}$.
\end{definition}

\begin{figure}[H]
\centering
\includegraphics[width=0.55\textwidth]{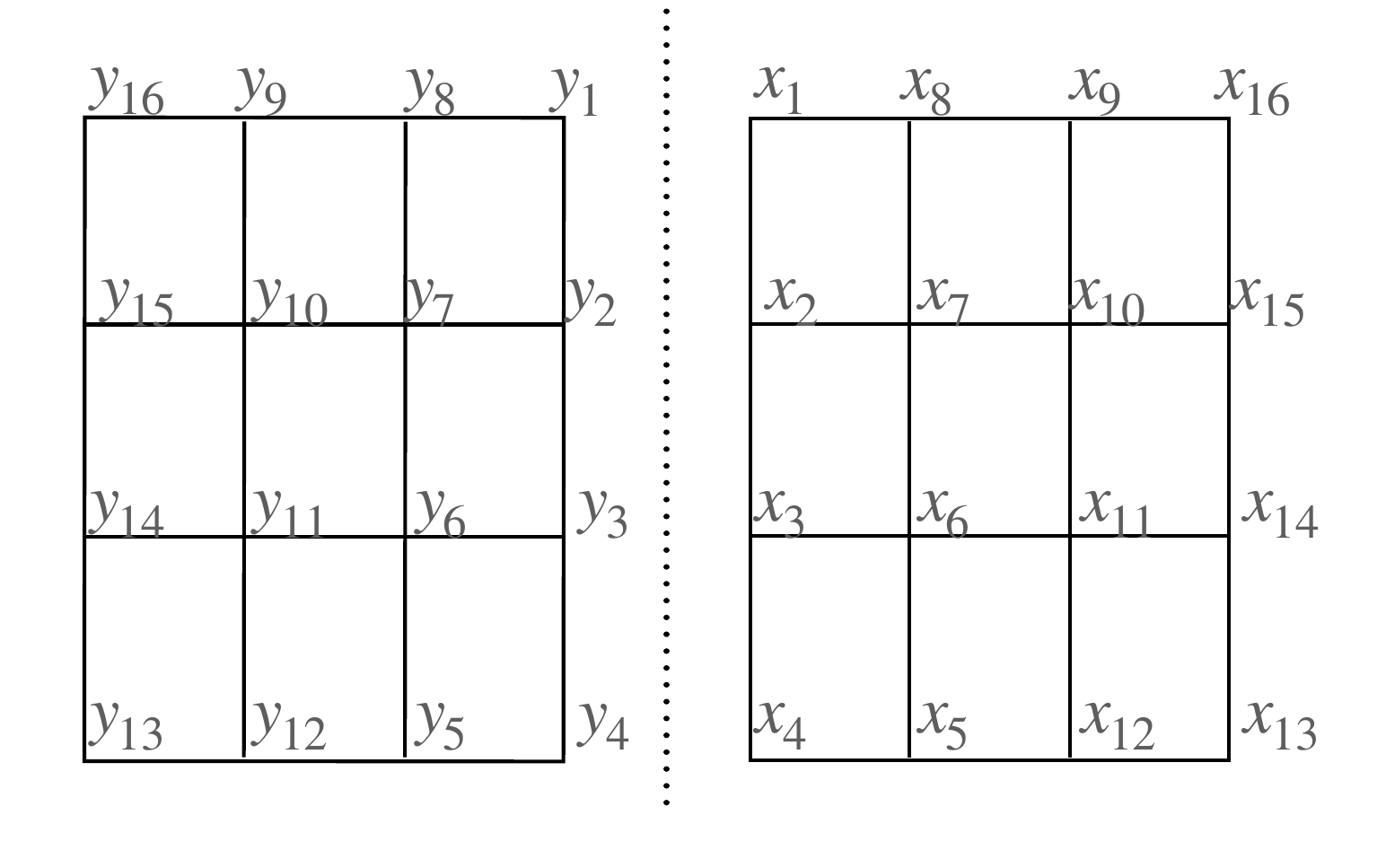}
\caption{The $D=2$ dimensional grid input of size $n=16$. There are four pairs of $x_i$'s and $y_j$'s, which are adjacent. In general, there are $n^{\frac{D-1}{D}}$  pairs of $x_i$'s and $y_j$'s which are adjacent, for $D$-dimensional grid input of size $n$.}
\label{fig:2D_grid}
 \end{figure}

Now we define the $D$-dimensional relational problem with geometrical locality constraint corresponding to the parallel repetition of $n$ Magic Square games, henceforth referred as PARMAGIC$(\delta, D)$. Since our result does not depend on the orientation graph $G$, we will drop $G$ hereafter.
\begin{definition} [PARMAGIC $(\delta,D)$] Given a $D$-dimensional grid input $x, y\in  \{0,1\}^{2n}$, the goal is to obtain $a, b\in  \{0,1\}^{2n}$ such that $n(1-\delta)$ tuples $(x_i,y_i,a_i,b_i)$ satisfy the Magic Square game predicate $V(x_i,y_i,a_i,b_i)$.
\end{definition}
Using PARMAGIC $(\delta,D)$, we have the following result in higher dimensions.

\begin{theorem} \label{thm: high_d}
Let $\delta \in [0,0.1]$. For the $\mathrm{PARMAGIC(\delta, D)}$ problem corresponding to $D$-dimensional grid-input of size $n$, where $D \in \mathbb{N}$, the following holds:
\begin{enumerate}
    \item (Completeness) $\mathrm{PARMAGIC(\delta, D)} \in \mathrm{\{gl\}FQU}\left(2, \delta/100, 1-2^{-\Omega(n)}, D \right)$, 
    
    where $ \mathrm{\{gl\}FQU}\left(d, \eps, p, D \right)$ represents the class of relation problems that can be solved by D-dimensional, geometrically-local, depth $d$,  uniform quantum circuits with  fan-in $4$ gates, per gate noise  $\eps$ and probability of success at least $p$ (worst-case input).
    \item (Soundness) $\mathrm{PARMAGIC(\delta, D)} \notin \mathrm{\{gl\}F}\left(o(n^\frac{1}{D}), 2^{-o(n)}, D \right)$, 
    
    where 
    $\mathrm{\{gl\}F}\left(d, p, D \right)$ represents the class of relation problems that can be solved by D-dimensional,  geometrically-local, depth $d$, noiseless, non-uniform, classical circuits with fan-in $2$ NAND gates, with success probability at least $p$ (averaged over the inputs drawn uniformly). 
   
\end{enumerate}
\end{theorem}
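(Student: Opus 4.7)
The plan is to mirror the proof of Theorem~\ref{thm: main} almost line by line, with the only substantive change being a higher-dimensional analogue of the communication-complexity bound in Lemma~\ref{Lemma: communication_protocol}. For completeness, I would reuse the same quantum strategy as in the $1D$ case: prepare, as an input-independent initial resource state, one pair of Bell pairs for each of the $n$ Magic Square instances, placed so that the two qubits of each pair sit next to the register holding $x_i$ and the register holding $y_i$ on the $D$-dimensional grid (these resource states are geometrically non-local, as allowed). The prover then applies the local classically-controlled gates $U(x_i)$ and $V(y_i)$ of Figure~\ref{fig:MS_circuit} and measures. This is still a depth-$2$, fan-in-$4$, geometrically-local, uniform quantum circuit, so with noiseless gates it solves $\mathrm{PARMAGIC}(0,D)$ with certainty, and with per-gate noise $\delta/100$ each instance succeeds with probability at least $1-\delta/2$ independently (up to the depth-$2$ light cone), so a Chernoff bound gives at least $(1-\delta)n$ satisfied instances with probability $1-2^{-\Omega(n)}$, giving the completeness claim.

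For soundness, I would first invoke Claim~\ref{claim:Toffoli}, which immediately goes through in $D$ dimensions because the fan-out of any NAND gate in a $D$-dimensional geometrically-local circuit is still bounded by a constant $c_D$ (depending only on $D$), so Toffoli-based fan-out simulation still costs only constant depth per layer. The real work is generalising Lemma~\ref{Lemma: communication_protocol}. I would define a $D$-dimensional horizontal cut as the hypersurface separating the region of the grid holding $(x_1,\ldots,x_n)$ from the region holding $(y_1,\ldots,y_n)$ and, at each layer $i$, partition the wires into $U_i$ (above the cut) and $D_i$ (below), with the same rule as in Definition~\ref{defn: h-cut} that straddling Toffoli gates have their outputs assigned to $U_{i+1}$. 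Alice and Bob simulate layer $i$ by each locally executing the Toffoli gates lying entirely on their side, using shared randomness; Bob then transmits to Alice the values on those wires in $D_i$ that feed a gate in $\mathscr{G}^a_i$, i.e.\ a gate straddling the cut in layer $i$.

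The crucial quantitative difference from the $1D$ case is the number of straddling gates per layer. By geometric locality, every gate in $\mathscr{G}^a_i$ must lie within a constant-radius neighbourhood of the cut, and the boundary of the cut in a $D$-dimensional grid of $2n$ inputs has area $\Theta(n^{(D-1)/D})$, so $|\mathscr{G}^a_i| = O(n^{(D-1)/D})$ and each one costs $O(1)$ bits. Summing over $d$ layers yields total communication $O(d\cdot n^{(D-1)/D})$. If $d=o(n^{1/D})$ this is $o(n)$, and since the final output $(a_1,\ldots,a_n,b_1,\ldots,b_n)$ splits naturally between Alice's and Bob's sides of the cut, the protocol reproduces the distribution of the circuit's outputs on the parallel Magic Square game. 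Fact~\ref{fact:jkleakageparallel} then forces the success probability on uniformly random inputs to be at most $2^{-\Omega(n)}$, yielding the soundness claim for $\delta \leq 0.1 < 0.01$.

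The step I expect to be the main obstacle is the careful definition of the horizontal cut in $D$ dimensions and the corresponding accounting of straddling gates, since unlike the $1D$ case where there is at most one straddling gate per layer, here one must argue that the cut can be chosen as a hyperplane-like surface whose discrete boundary in the grid has size $\Theta(n^{(D-1)/D})$, and that each geometrically-local gate intersecting this surface contributes only $O(1)$ bits. Everything else (Toffoli reduction, completeness, concentration, invocation of Fact~\ref{fact:jkleakageparallel}) is a routine adaptation of the $1D$ arguments.
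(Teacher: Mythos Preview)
Your proposal is correct and mirrors the paper's own proof essentially line by line: the paper also reduces completeness to the $1D$ argument and handles soundness via a $D$-dimensional analogue of Lemma~\ref{Lemma: communication_protocol} in which the per-layer communication across the cut is $O(n^{(D-1)/D})$, so total communication is $O(d\cdot n^{(D-1)/D})=o(n)$ when $d=o(n^{1/D})$, after which Fact~\ref{fact:jkleakageparallel} applies. (Minor slip: your final line ``$\delta \leq 0.1 < 0.01$'' is a typo; presumably you meant that $(1-\delta)\geq 0.9$ keeps you in the regime where the parallel-repetition bound of Fact~\ref{fact:jkleakageparallel} applies.)
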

\begin{proof}
The proof of completeness is similar to the completeness proof of Theorem~\ref{thm: main}. For soundness, we can use the analogue of Lemma~\ref{Lemma: communication_protocol} for higher dimensional circuits. In this case, the corresponding communication protocol may communicate per layer $\mathcal{O}( n^\frac{D-1}{D})$ bits. Hence  overall communication would be $\mathcal{O}(d \cdot n^\frac{D-1}{D})$. This implies that when $d = o(n^\frac{1}{D})$, then communication is $o(n)$. The rest of the argument follows similar to the soundness proof of Theorem~\ref{thm: main}.
\end{proof}

\subsection{Wider class of Bell games} \label{subsec: wide_Bell games}
Our results can be applied to wider class of two-prover Bell games $\mathcal{G}=\left(V,\pi\right)$ where $V$ is a  predicate $V:\mathcal{X}\times\mathcal{Y\times\mathcal{A}}\times\mathcal{B}\rightarrow\{0,1\}$
and $\pi $ is a probability distribution on $\mathcal{X}\times\mathcal{Y}$. To extend our result to wider class of Bell games, we use the following Fact.

\begin{fact}[\cite{jain2022direct}] \label{fact:jkleakageparallel_general}
    The probability of classical players winning $(\omega_c(\mathcal{G}) + \eta)n$ games out of $n$ parallel copies of a two prover Bell game $\mathcal{G}=\left(V,\pi\right)$ with predicate $V$ and $\pi$ being a distribution on $\mathcal{X}\times\mathcal{Y}$ with classical value $\omega_c(\mathcal{G})$, with $c n$ bits being interactively leaked is $2^{-\Omega(\eta^3 n)+cn}$.
\end{fact}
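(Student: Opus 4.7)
The plan is to generalize Fact~\ref{fact:jkleakageparallel} (the Magic Square case) to an arbitrary two-prover Bell game $\mathcal{G}=(V,\pi)$ by following the template of Jain and Kundu~\cite{jain2022direct}. The strategy has two ingredients: a threshold-style parallel repetition bound for Bell games that gives a no-leakage inequality of the form $\Pr[\text{win}\ge (\omega_c(\mathcal{G})+\eta)n] \le 2^{-\Omega(\eta^3 n)}$, and a reduction showing that each bit of interactive leakage costs at most an additive constant in the exponent, yielding the combined bound $2^{-\Omega(\eta^3 n)+cn}$.

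First I would prove the no-leakage version via an information-theoretic argument in the spirit of Holenstein and Rao. Fix a deterministic classical strategy for $\mathcal{G}^n$ and a hypothetical winning set $W\subseteq [n]$ of size at least $(\omega_c(\mathcal{G})+\eta)n$. For a uniformly random coordinate $i\in W$, examine the conditional distribution of $(X_i,Y_i,A_i,B_i)$ given the inputs and outputs in the remaining coordinates. By the chain rule for mutual information, the average per-coordinate information between Alice's and Bob's registers is small. A correlated-sampling step then converts this into a single-shot strategy for $\mathcal{G}$, so the conditional win probability in coordinate $i$ cannot exceed $\omega_c(\mathcal{G})+\eta/2$. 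Combining this average statement with a Chernoff-type concentration argument over the choice of $W$ yields the cubic $\eta^3 n$ exponent.

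Next I would incorporate the interactive leakage $T$ of $cn$ bits. Condition on $T=t$ and use the identity $I(A;B\mid T)\le I(A;B)+H(T)\le I(A;B)+cn$ to conclude that the transcript contributes at most $cn$ additional bits of mutual information across all coordinates. Re-running the no-leakage argument with this inflated information budget, and distributing the $cn$ extra bits through the chain rule, shifts the exponent down by $cn$ and gives $2^{-\Omega(\eta^3 n)+cn}$. Plugging the generalized Fact into the circuit-to-communication reduction of Lemma~\ref{Lemma: communication_protocol} would then establish an analogue of the soundness part of Theorem~\ref{thm: high_d} for any Bell game with a nontrivial classical-versus-quantum gap.

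The main obstacle is obtaining the cubic $\eta^3$ exponent in the threshold step: standard product-type parallel repetition delivers only an $\eta^2$ bound for the ``win all $n$'' event, and strengthening this to winning just a $(\omega_c(\mathcal{G})+\eta)$ fraction requires the more delicate anchored correlated-sampling machinery, which must be carried out uniformly in $\mathcal{G}$ rather than exploiting any special structure of the Magic Square game. A second technical point is ensuring that the leakage reduction interacts cleanly with the threshold concentration bound, so that the additive $cn$ term does not degrade into something worse such as $cn/\eta$.
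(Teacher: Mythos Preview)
The paper does not prove this statement at all: it is stated as a \emph{Fact} and attributed to Jain and Kundu~\cite{jain2022direct}, used as a black box in the proof of Theorem~\ref{thm: wider_Bell}. There is therefore no ``paper's own proof'' to compare your proposal against.

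Your sketch is a plausible high-level reconstruction of a Holenstein--Rao style threshold parallel-repetition argument together with a leakage-cost accounting, but whether it faithfully reproduces the actual argument in~\cite{jain2022direct} (in particular, whether the cubic exponent $\eta^3$ and the clean additive $+cn$ term arise exactly as you describe, or via a different mechanism such as a direct product theorem phrased in terms of smooth max-information or a different anchoring device) cannot be assessed from this paper alone. If your task is to match the present paper, the correct answer is simply to cite~\cite{jain2022direct}; if your task is to supply an independent proof, you would need to consult that reference to verify that your two ingredients actually deliver the stated constants and that the interaction between the leakage term and the threshold bound does not introduce the $cn/\eta$ loss you yourself flag as a concern.
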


The relation problem PARMAGIC can be modified as follows.

\begin{definition} [PARBELL $(\delta)$] Given line input $x, y\in  \{0,1\}^{2n}$, the goal is to obtain $a, b\in  \{0,1\}^{2n}$ such that $n(\omega_q(\mathcal{G})-\delta)$ tuples $(x_i,y_i,a_i,b_i)$ satisfy the Bell game $\mathcal{G}=\left(V,\pi\right)$ predicate $V(x_i,y_i,a_i,b_i)$, where $\pi$ is a distribution on $\mathcal{X}\times\mathcal{Y}$ and $\omega_q(\mathcal{G})$  is the quantum value.
\end{definition}
Moreover, we will need the following assumption.
\begin{assume} \label{assume: circuit_depth}
    The measurement settings that achieve the quantum value $\omega_q(\mathcal{G})$ for the Bell game corresponding to $\mathrm{PARBELL(\delta)}$ can be implemented using $1D$, geometrically-local, constant depth,  uniform quantum circuits with  fan-in $4$ gates.
\end{assume}
Armed with Assumption~\ref{assume: circuit_depth}, we get the following extension of Theorem~\ref{thm: main}.

\begin{theorem} \label{thm: wider_Bell}
Let $\delta \in [0, \omega_q(\mathcal{G})-\omega_c(\mathcal{G}))$ be a constant. For the $\mathrm{PARBELL(\delta)}$ problem corresponding to line-input of size $n$, and satisfying Assumption~\ref{assume: circuit_depth}, the following holds: 
\begin{enumerate}
    \item (Completeness) $\mathrm{PARBELL(\delta)} \in \mathrm{\{gl\}FQU}\left(O(1), \delta/100, 1-2^{-\Omega(n)} \right)$, 
    
    where $ \mathrm{\{gl\}FQU}\left(d, \eps, p \right)$ represents the class of relation problems that can be solved by $1D$, geometrically-local, depth $d$,  uniform quantum circuits with  fan-in $4$ gates, per gate noise  $\eps$ and probability of success at least $p$ (worst-case input).
    \item (Soundness) $\mathrm{PARBELL(\delta)} \notin \mathrm{\{gl\}F}\left(o(n), 2^{-o(n)} \right)$, 
    
    where 
    $\mathrm{\{gl\}F}\left(d, p \right)$ represents the class of relation problems that can be solved by $1D$,  geometrically-local, depth $d$, noiseless, non-uniform, classical circuits with fan-in $2$ NAND gates, with success probability at least $p$ (averaged over the inputs drawn uniformly). 
\end{enumerate}
\end{theorem}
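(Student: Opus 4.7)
The plan is to follow the structure of the proof of Theorem~\ref{thm: main}, substituting the Magic Square game with the general Bell game $\mathcal{G}$ and replacing Fact~\ref{fact:jkleakageparallel} with its general counterpart, Fact~\ref{fact:jkleakageparallel_general}. Assumption~\ref{assume: circuit_depth} does the heavy lifting on the quantum side by guaranteeing a single-copy circuit with exactly the right geometric shape; the classical hardness side is unchanged conceptually, only the parameter of the direct product theorem is different.

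\emph{Completeness.} First I would invoke Assumption~\ref{assume: circuit_depth} to obtain, for a single copy of $\mathcal{G}$, a $1D$ geometrically-local, constant-depth $d_0$, fan-in $4$ quantum circuit $C_{\mathcal{G}}$ acting on Alice's input register, Bob's input register and an input-independent entangled resource, whose measurement outcomes $(a,b)$ on input $(x,y)\sim\pi$ win $\mathcal{G}$ with probability $\omega_q(\mathcal{G})$. The circuit for $\mathrm{PARBELL}(\delta)$ is then obtained by stacking $n$ copies of $C_{\mathcal{G}}$ side by side along the line, each copy consuming its own chunk of the line input and its own block of the (non-local) resource state. Because each local copy has constant depth, local-stochastic noise of rate $\delta/100$ per gate reduces a single copy's winning probability by at most $O(d_0\cdot \delta/100)\le \delta/2$. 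The events ``copy $i$ wins'' are then independent across $i$ and each occurs with probability at least $\omega_q(\mathcal{G})-\delta/2$, so a Hoeffding bound guarantees that at least $n(\omega_q(\mathcal{G})-\delta)$ copies win, except with probability $2^{-\Omega(n)}$. This places $\mathrm{PARBELL}(\delta)\in \mathrm{\{gl\}FQU}(O(1),\delta/100,1-2^{-\Omega(n)})$.

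\emph{Soundness.} Suppose a $1D$ geometrically-local, depth-$d$ classical circuit $\mathscr{C}$ built from fan-in $2$ NAND gates and equipped with shared randomness solves $\mathrm{PARBELL}(\delta)$ with probability $p\ge 2^{-o(n)}$, where $d=o(n)$. By Claim~\ref{claim:Toffoli}, $\mathscr{C}$ can be converted into an $O(d)$-depth Toffoli circuit; by Lemma~\ref{Lemma: communication_protocol}, that Toffoli circuit can be simulated by a two-party randomness-assisted protocol in which Alice holds the ``$x$'' half of the inputs, Bob holds the ``$y$'' half, and the total interaction is $O(d)=o(n)$ bits. This yields a classical protocol for $n$ parallel copies of $\mathcal{G}$ that wins at least $n(\omega_q(\mathcal{G})-\delta)$ copies with probability $p$. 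Fix $\eta=(\omega_q(\mathcal{G})-\omega_c(\mathcal{G})-\delta)/2>0$, which is positive by hypothesis on $\delta$; then $n(\omega_q(\mathcal{G})-\delta)>n(\omega_c(\mathcal{G})+\eta)$, so Fact~\ref{fact:jkleakageparallel_general} bounds $p$ by $2^{-\Omega(\eta^3 n)+o(n)}=2^{-\Omega(n)}$, contradicting $p\ge 2^{-o(n)}$.

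\emph{Main obstacle.} The main subtlety I anticipate is verifying that the single-copy quantum circuit promised by Assumption~\ref{assume: circuit_depth} can be juxtaposed $n$ times along the line without destroying geometric locality or constant depth: one has to check that the boundary between neighbouring copies (both in the gates and in the resource-state layout) respects the $1D$ nearest-neighbour topology, so that stacking copies only multiplies the number of qubits and not the depth. This is a layout exercise once Assumption~\ref{assume: circuit_depth} is in hand, but it is the only place where the structural $D=1$ topology enters the completeness argument. In the soundness direction, the delicate point is merely the parameter choice: the strict inequality $\delta<\omega_q(\mathcal{G})-\omega_c(\mathcal{G})$ is exactly what lets one exhibit a constant gap $\eta$ so that the exponential $2^{-\Omega(\eta^3 n)}$ in Fact~\ref{fact:jkleakageparallel_general} dominates the $2^{o(n)}$ coming from the communication, and everything else follows mechanically.
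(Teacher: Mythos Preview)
Your proposal is correct and follows exactly the approach the paper indicates: mirror the proof of Theorem~\ref{thm: main}, invoking Assumption~\ref{assume: circuit_depth} on the completeness side and replacing Fact~\ref{fact:jkleakageparallel} by Fact~\ref{fact:jkleakageparallel_general} (via Claim~\ref{claim:Toffoli} and Lemma~\ref{Lemma: communication_protocol}) on the soundness side. The paper's own proof is a one-sentence pointer to precisely this substitution, so your write-up is in fact more detailed than the original.
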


\begin{proof}
The proof is similar to the proof of Theorem~\ref{thm: main} and follows from Fact~\ref{fact:jkleakageparallel_general} and Assumption~\ref{assume: circuit_depth}.
\end{proof}

\section{Open problems} \label{sec: discuss}
Our work leads to several natural open problems, mentioned below.

\paragraph{Using quantum contextuality:}
In future, it would be interesting to explore the wider ramifications of geometric locality in complexity theory. Further, in this work, we used Bell nonlocality as one of the important ingredients to obtain our central result. Bell nonlocality is a special case of quantum contextuality. In Ref~\cite{aasnaess2022comparing}, the author notes that quantum circuit and computational problem discussed in Ref.~\cite{bravyi2018quantum}  is derived from a
family of non-local games related to the well known GHZ non-local game, which can be thought of as special instances of quantum contextuality.
They present a generalised version of the construction in Ref.~\cite{bravyi2018quantum} as well as  a systematic way
of taking examples of contextuality and producing unconditional quantum
advantage results with shallow circuits. It would be of interest to investigate the potential applicability of concepts from Ref~\cite{aasnaess2022comparing} to improve the results presented in our paper by using tools from quantum contextuality.

\paragraph{Light cone arguments:}
The results in Ref.~\cite{bravyi2018quantum,coudron2021trading,bravyi2020quantum}
rely on lightcone arguments. Given an input bit $x_j$, the forward lightcone $L^{\rightarrow}_C (x_j)$ is the set of output bits $z_k$ such that ${x_j,z_k}$ are correlated. Similarly, for an output bit $z_k$, the backward lightcone $L^{\leftarrow}_C (z_k)$ is the set of input bits $x_j$ such that ${x_j,z_k}$ are correlated. It is not obvious how to concurrently apply parallel repetition and lightcone arguments, when starting with shared randomness, to amplify the soundness guarantees. It would be interesting to explore if ideas based on lightcone arguments can help recover or improve the results in this work.

\paragraph{Enforcing geometric locality:}
Enforcing geometric locality on prover's side is challenging and hence it remains open how one can design quantum advantage protocols that benefit from geometric locality.

\section*{Acknowledgement}
K.B. expresses gratitude to Dax Koh for engaging in discussions regarding circuit complexity and also acknowledges Atul Singh Arora for their insightful discussions on Bell games.

The work of R.J. is supported by the NRF grant NRF2021-QEP2-02-P05 and the Ministry
of Education, Singapore, under the Research Centres of Excellence program. This work was
done in part while R.J. was visiting the Technion-Israel Institute of Technology, Haifa, Israel and the Simons Institute for the Theory of Computing, Berkeley, CA, USA.

\bibliographystyle{alphaurl}
\bibliography{circuit_lower}

\end{document}